\renewcommand{\phi}{\varphi}
\newtheorem{theorem}{Theorem}
\title{Mean escape time for  randomly switching narrow gates in a cellular flow \footnote{
$\quad$$^{1}$ School of Mathematics and Statistics, Zhengzhou University, 100 Kexue Road, Zhengzhou 450001, China.
$\quad$$^{2}$ Department of Applied Mathematics, Illinois Institute of Technology, Chicago 60616, USA.
$\quad$$^{3}$ Center for Mathematical Sciences \& School of Mathematics and Statistics, Huazhong University of Science and Technology, Wuhan 430074,  China.}}
\author{Hui Wang$^{1}$, Jinqiao Duan$^{2,*}$, Xianguo Geng$^1$, Ying Chao$^3$}
\date{\today}
\begin{document}

\bibliographystyle{plain}

\maketitle

\begin{abstract}
The escape of particles through a narrow absorbing gate in confined domains is a abundant phenomenon in various systems in physics, chemistry and molecular biophysics.
We consider  the narrow escape problem  in a cellular flow when the two gates randomly switch between different states with a switching rate $k$ between the two gates. After briefly deriving the coupled   partial differential equations for the escape time through two gates, we compute the mean   escape time for particles escaping from the gates with different initial states. By  numerical simulation under nonuniform boundary conditions, we quantify how  narrow  escape time is affected by the switching rate $k$ between the two gates, arc length $s$  between two gates, angular velocity $w$ of the cellular flow and diffusion coefficient  $D$.  We reveal that the mean escape time decreases with the switching rate $k$ between the two gates, angular velocity $w$ and diffusion coefficient $D$ for fixed arc length, but takes the minimum when the two gates are evenly separated on the  boundary for any given switching rate $k$ between the two gates. In particular, we find that  when  the    arc length size $\varepsilon$ for the gates  is sufficiently small, the average narrow  escape time  is approximately independent of  the gate arc length size.  We further indicate  combinations of  system parameters (regions located  in the parameter space) such that the mean escape time is the longest or shortest. Our findings provide  mathematical understanding for  phenomena such as how ions  select ion channels and how       chemicals leak in annulus ring containers, when drift vector fields are present.

 Key words: Narrow escape; Brownian motion; small gate; mixed boundary conditions; biophysical modeling.
\end{abstract}

\section{Introduction}

The narrow escape problems have attracted a lot of attention in recent years, due to its significant relevance in cellular and molecular biology,  and chemical physics \cite{Holcman2015, Bressloff2014, Bressloff2013, Holcman2017, Holcman2012, Reingruber2009a}. The heart of a narrow escape problem is to compute the narrow escape time  (NET), namely the mean time for a particle starting in the domain before   exiting through a narrow gate on the boundary.

In the biophysics context, when an ion searches for an open ion channel within the membrane, neurotransmitter receptors transmit signal within a synapse of a neuron, a reactive particle to activate a given protein, or  newly transcribed mRNA transports from the nucleus to the cytoplasm through nuclear pores \cite{Benichou2008, Schuss2007, Holcman2004}. These examples  have something in common: `Particles' are all confined to a bounded domain with small exits (or holes, or gates) on the boundary of the domain. Meanwhile, in order to fulfill their biological function, particles must exit from the small gates  on the boundary. For instance, viruses must enter the cell nucleus through small nanopores in order to replicate \cite{Lagache2017}.

The narrow escape problems have been studied via various stochastic models recently. Most of the scenarios considered are two-dimensional \cite{Ammari2011, Grebenkov2016, Li2014, Lindsay2015, Pillay2010}, and some are one-dimensional \cite{Reingruber2010, Reingruber2009b} or three-dimensional \cite{Reingruber2009b, Gomez2015, Li2017}. In addition, some authors  consider the boundary with one gate \cite{Benichou2008,Chevalier2011, Singer2008, Agranov2018}, with two gates\cite{ Reingruber2010, Chevalier2011}, while others multiple gates \cite{Lagache2017, Chevalier2011, Cheviakov2012}.

 Several papers \cite{Lindsay2015, Reingruber2009b, Gomez2015, Holcman2013, Holcman2008} have been devoted to the asymptotic approximate expressions  for the narrow escape time.
 Since the mixed boundary conditions, there is no exact solution for a narrow escape problem. The narrow escape problems are mostly studied analytically \cite{Grebenkov2016,  Lindsay2015, Pillay2010, Reingruber2010, Gomez2015, Chevalier2011, Singer2008, Cheviakov2012,Grebenkov2017, Benichou2015, Bressloff2015, Piazza2015,  Singer2007, Berezhkovskii2010, Levernier2017}. For example, a general exact formula for the mean first passage time from a fixed point inside a planar domain to an escape region on its boundary is attained in \cite{Grebenkov2016}; high-order asymptotic formulas for the mean first passage time  are derived in \cite{Lindsay2015}; approximations for the average mean first passage time  are found  in\cite{Gomez2015}; the upper and lower bounds of the mean first passage time  for mortal walkers are derived  in\cite{Grebenkov2017}. Moreover, a spectral approach to derive an exact formula for the mean exit time of a particle through a hole on the boundary is developed    in\cite{Benichou2015};   partial differential equation and probabilistic methods are applied to solve escape problems in \cite{Bressloff2015};   a generalized Kramers formula for the mean escape time through a narrow window is obtained in\cite{Singer2007}; and boundary homogenization is used when the   boundary contains  non-overlapping identical absorbing arcs in \cite{Berezhkovskii2010}.

 A few authors numerically studied the narrow escape problems, including    Brownian simulations  \cite{Reingruber2009b} and  Monte Carlo simulations  \cite{Rojo2012}. Think of the narrow escape problem the other way around, Agranov and Meerson  \cite{Agranov2018} developed a formalism to evaluate the nonescape probability of a gas of diffusing particles.

Most works approximately estimate the narrow escape time of particles under Brownian diffusion (i.e., no drift or no external vector field). It turns out that the escape time is described by a mixed Dirichlet-Neumann boundary value problem for the Laplace operator in a bounded domain \cite{Holcman2015, Lions1984}. Very recently, Lagache and Holcman
 \cite{Lagache2017}  considered  the narrow escape problem for particles in a constant drift, in order  to analyze viral entry into the cell nucleus.


In this paper, we consider a two dimensional model of narrow escape problem for particles  in a unit annulus,  with two randomly switching gates on the outer boundary, under a   vector field (i.e., a cellular flow). The circular annulus may represent the cell cytoplasm, with outer radius $r_2$ and inner radius $r_1$. The two dimensional model may represent flat culture cells. Since the adhesion to the substrate, they stay flat. The thickness can be neglected in this idealized model.


This paper is organized as follows. In Section 2, we first describe a stochastic narrow escape model, then  briefly drive the coupled partial differential equations to be satisfied by the mean escape time through two gates. In Section 3, we present numerical experiments and analysis.  We end the paper with a discussion  in Section 4.

\section{Materials and methods}

\subsection{Stochastic narrow escape model}
\begin{figure}[!htb]
\centering\includegraphics[height=6cm ,width=6cm]{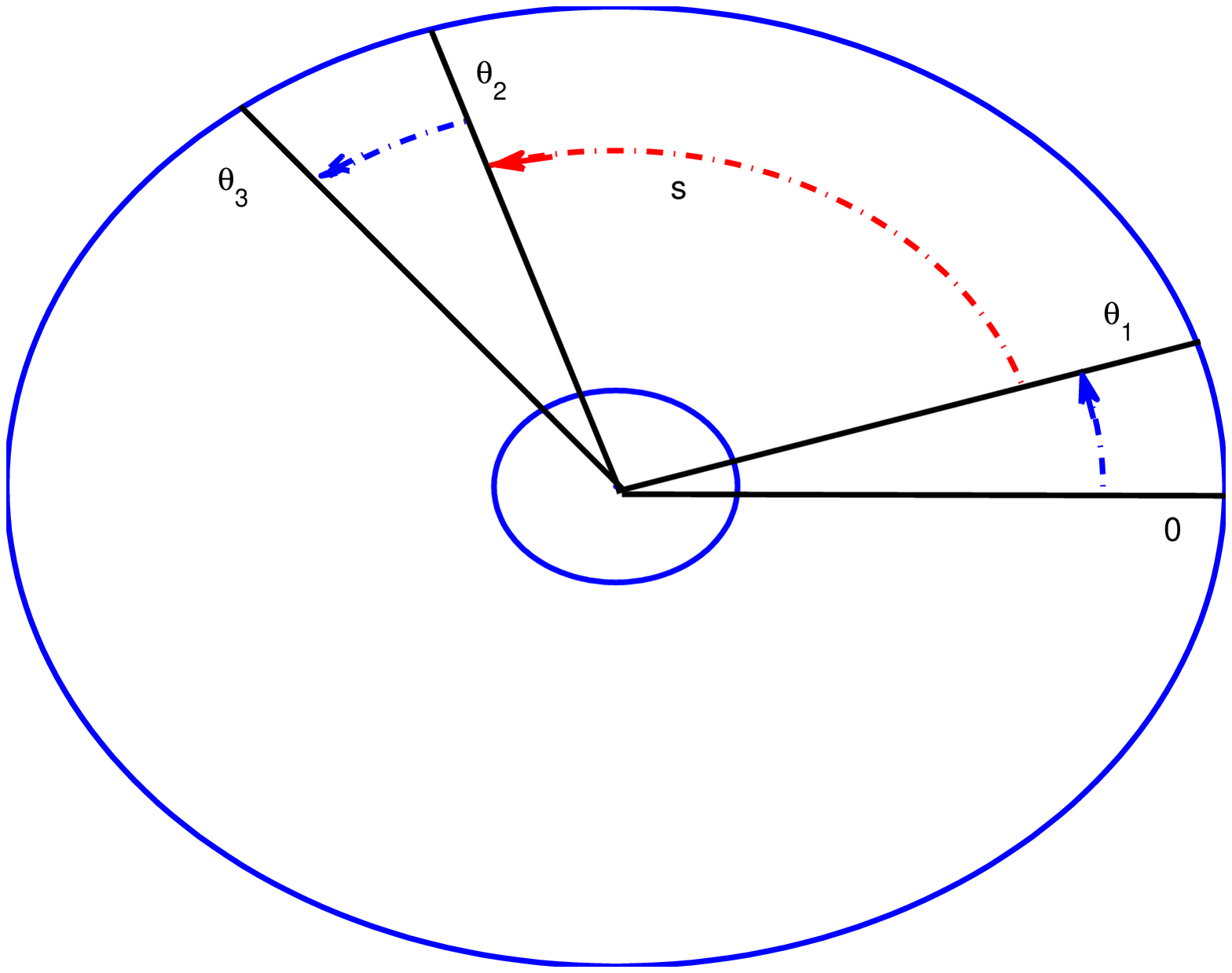}
\caption{(Color online) A  unit annular domain with small absorbing gates on an otherwise reflecting outer circular boundary.}
\label{Fig.0}
\end{figure}

We consider particles moving in an annular domain with  inner radius $r_1$ and outer radius $r_2$, in  a   cellular  flow.
The   steady cellular  flow \cite{Acheson1990} is  induced by the rotation of the inner boundary with angular velocity $w$, and is expressed in polar coordinates as $(0,  A r+\frac {B}{r})$
with
\begin{eqnarray}\label{coeffient}
 A =  \frac { -w r_1^2}{r_2^2-r_1^2}, \\
 B =  \frac {w r_1^2 r_2^2}{r_2^2-r_1^2}.
\end{eqnarray}
By the transformation of polar coordinates, $x=r \cos \theta$, $y=r \sin \theta$ with  $r=\sqrt{x^2+y^2}$,  the cellular flow in the  Cartesian  coordinates  is \\
\begin{eqnarray}\label{drift}
 b_1(x, y)= -y \frac {-w r_1^2 (x^2+y^2)+w r_1^2 r_2^2 }{\sqrt{x^2+y^2}(r_2^2-r_1^2)}, \\
 b_2(x, y) = x  \frac {-w r_1^2 (x^2+y^2)+w r_1^2 r_2^2 }{\sqrt{x^2+y^2}(r_2^2-r_1^2)}.
\end{eqnarray}
With the driving cellular flow $b(x, y):=(b_1(x, y), b_2(x, y))$, we consider the two dimensional domain $\Omega$,  whose boundary  is decomposed as $\partial \Omega=\partial \Omega_c \bigcup \partial \Omega_1 \bigcup\partial \Omega_2 $.   The boundary has two  small gates at the arcs $\partial \Omega_1$ and $\partial \Omega_2$, together with the remaining part $\Omega_c$.
A  particle's  trajectory  is  reflected on   $\Omega_c$,  but  absorbed at  gates $\Omega_1$ or $\Omega_2$.  Each gate is a small absorbing arc. We consider the  following stochastic narrow escape model, i.e.,  the equations of motion for  a particle
\begin{align}
\begin{cases}
   d x_t &=b_1(x_t, y_t)dt+\sqrt {2D} d B_t^1- n_1(x_t, y_t)dL_t,\\
   d y_t &=b_2(x_t, y_t)dt+\sqrt {2D} d B_t^2-  n_2(x_t, y_t)dL_t,
   \end{cases}
\end{align}
where $ (b_1(x, y), b_2(x, y))$ is a drift vector field (a cellular flow),  $(B_t)_{t\geq 0} = (B_t^1,  B_t^2)$  is a standard Brownian motion, $n = ( n_1, n_2 )$ is the outward unit normal vector on the boundary, $(L_t)_{t\geq 0}$ is a continuous nondecreasing process (with $L_0=0$) which increases only when $(x, y)$ is on the boundary $\partial \Omega $:
\begin{equation}
L_t=\int_0^t\mathbb{I}_{(x, y)\in \partial \Omega}dL_s.
\end{equation}

Assume that the  two gates at the arcs $\partial \Omega_1$ and $\partial \Omega_2$   open alternatively \cite{Ammari2011, Pillay2010} in terms of a telegraph process $(N_t)_{t\geq0}$ with parameter $k$, where  $k$ is the rate of switching between the two gates. When the particle in the cytoplasm hits the open gate it escapes, if not it will be reflected. The telegraph process $(N_t)_{t\geq0}$ with parameter $k$ is a Markov process which takes on only two values, $1$ and $2$. The different values of Markov process corresponding to different states, different function of cell \cite{Bardet2009}. The process can be described as follows
$$
N_t=
\left \{
  \begin{array}{ll}
    1, &  \hbox{if $ T_{2j}\leq t< T_{2j+1}$ for some $j$,} \\
    2, & \hbox{if $ T_{2j+1}\leq t< T_{2j+2}$ for some $j$,}
  \end{array}
\right.
$$
where $T_0=0$, $T_j=\sum_{l=1}^j\tau_l$, and $\tau_l$   is a sequence of independent and identically distributed exponential random variables with parameter $k>0$. That is, $\mathbb{E}[\tau_l]=\frac{1}{k}$.
Denote $\textbf{X}_t =(x_t, y_t)$ and $\textbf{x} =(x, y)$.
We denote $\mathcal{T}$ as the stopping time corresponding to the escape of the particle:
$\mathcal{T}=\mathcal{T}_1\wedge\mathcal{T}_2$, $\mathcal{T}_j=\inf \{t\geq0, N_t=j \ and \ \mathbf{X}_t\in \partial \Omega_j\}$, j=1, 2.

%

\subsection{ Narrow escape time with drift }
Our goal is to compute the expected quantities $u_1(\textbf{x}) = \mathbb{E}_{1,\textbf{x}} [\mathcal{T}]$ and
$u_2(\textbf{x}) = \mathbb{E}_{2,\textbf{x}} [\mathcal{T}]$  of the stopping time given the initial state is $N_0=j$ and $\mathbf{X}_0=\mathbf{x}$, j=1, 2.

More specifically, $u_1(x, y)$ is the mean time that the particle starting at  $(x, y)$ in the domain escape from any gate with the first gate open initially. Similarly,
$u_2(x, y)$ is the mean time that the particle starting at  $(x, y)$  in the domain escape from any gate with the second gate open initially. We  now prove the following theorem.

\begin{theorem}
The  mean  narrow escape times $u_1, u_2$  satisfy the following coupled partial differential equations \\
\begin{equation}\label{meannarrow}
   -\begin{pmatrix}
   1 \\
   1\end{pmatrix} =
   k\begin{pmatrix}
   -1 & 1\\
   1  & -1
   \end{pmatrix}
   \begin{pmatrix}
   u_1(x, y) \\
   u_2(x, y)
   \end{pmatrix} +
   D \Delta\begin{pmatrix}
   u_1(x, y) \\
   u_2(x, y)
   \end{pmatrix} +
   \begin{pmatrix}
   b  \cdot  \nabla u_1(x, y)  \\
   b  \cdot  \nabla  u_2(x, y)
   \end{pmatrix}
\end{equation}
with nonuniform boundary conditions:
\begin{eqnarray}\label{Neu1}
  \partial_n u_j |_{\partial \Omega_c} = 0, j=1, 2,
\end{eqnarray}

\begin{eqnarray}\label{Dir}
 u_1 |_{\partial \Omega_1} = 0,  u_2 |_{\partial \Omega_2} = 0,
\end{eqnarray}

\begin{eqnarray}\label{Neu2}
  \partial_n u_1 |_{\partial \Omega_2} = 0,   \partial_n u_2 |_{\partial \Omega_1} = 0,
\end{eqnarray}
where $\partial_n $ represents the normal derivative and $n$ is outward unit normal vector on the outer boundary.
\end{theorem}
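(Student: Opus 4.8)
The plan is to recognize the pair $(\mathbf{X}_t, N_t)$ as a switching (Markov-modulated) reflected diffusion and to characterize the mean escape times as the solution of the backward Kolmogorov equation for its infinitesimal generator. First I would write down this generator. Acting on a smooth test function $v(\mathbf{x}, i)$, $i \in \{1,2\}$, it splits into a spatial part inherited from the SDE and a jump part inherited from the telegraph process:
\[
\mathcal{L}v(\mathbf{x}, i) = D\Delta v(\mathbf{x}, i) + b(\mathbf{x})\cdot\nabla v(\mathbf{x}, i) + k\big(v(\mathbf{x}, i') - v(\mathbf{x}, i)\big),
\]
where $i'$ denotes the state other than $i$. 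The spatial part $D\Delta + b\cdot\nabla$ is obtained by applying It\^o's formula to $v(\mathbf{X}_t, i)$ along the reflected SDE: the Brownian and drift terms produce exactly $D\Delta v + b\cdot\nabla v$, while the local-time term contributes $-\partial_n v\, dL_t$, which I will arrange to vanish through the Neumann conditions below. The jump part is the generator $kQ$, with $Q=\left(\begin{smallmatrix}-1 & 1\\ 1 & -1\end{smallmatrix}\right)$, of the two-state telegraph process $N_t$, since $N_t$ leaves each state at rate $k$.

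Next I would derive the equation for $u_i(\mathbf{x}) = \mathbb{E}_{i,\mathbf{x}}[\mathcal{T}]$ by the standard mean-first-passage-time argument. Conditioning on the evolution over an infinitesimal time $h$ and using the strong Markov property of $(\mathbf{X}_t, N_t)$, for $\mathbf{x}$ in the interior one has $u_i(\mathbf{x}) = h + \mathbb{E}_{i,\mathbf{x}}[u_{N_h}(\mathbf{X}_h)] + o(h)$; expanding the expectation with the generator gives $\mathbb{E}_{i,\mathbf{x}}[u_{N_h}(\mathbf{X}_h)] = u_i(\mathbf{x}) + h\,\mathcal{L}u(\mathbf{x}, i) + o(h)$. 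Dividing by $h$ and letting $h\to 0$ yields $\mathcal{L}u_i = -1$ for $i = 1,2$, which is precisely the coupled system \eqref{meannarrow} once written in the vector form $-\mathbf{1} = kQ\,u + D\Delta u + b\cdot\nabla u$. Equivalently, a smooth solution can be verified to equal the escape time by applying Dynkin's formula to $u(\mathbf{X}_t, N_t) + t$ and optional stopping at $\mathcal{T}$, using that $u$ vanishes on the open gate at the escape instant.

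The boundary conditions then encode the physical rule that gate $j$ absorbs only in state $j$ and otherwise reflects. On the always-reflecting arc $\partial\Omega_c$, the local-time term in It\^o's formula forces $\partial_n u_j = 0$ for both $j$, giving \eqref{Neu1}. When the process is in state $1$, gate $\partial\Omega_1$ is open, so a particle reaching it escapes instantly and $u_1 = 0$ on $\partial\Omega_1$; symmetrically $u_2 = 0$ on $\partial\Omega_2$, which is \eqref{Dir}. In the complementary states the relevant gate is closed and hence reflecting, so the same local-time cancellation gives $\partial_n u_1 = 0$ on $\partial\Omega_2$ and $\partial_n u_2 = 0$ on $\partial\Omega_1$, which is \eqref{Neu2}.

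I expect the main obstacle to be the rigorous treatment of the reflecting boundary rather than the algebra of the generator. Concretely, one must justify that the local-time contribution $\int_0^t \partial_n v(\mathbf{X}_s, N_s)\, dL_s$ in It\^o's formula genuinely vanishes under the Neumann conditions, which requires that $L_t$ increase only on $\partial\Omega$ and that the Skorokhod problem defining the reflected diffusion be well posed across the mixed Dirichlet--Neumann boundary, without local time accumulating on the small absorbing arcs before escape. A secondary point is the interchange of limit and expectation in the infinitesimal argument, equivalently the integrability $\mathbb{E}_{i,\mathbf{x}}[\mathcal{T}] < \infty$, which follows from the boundedness of $\Omega$ together with the nondegeneracy $D > 0$ of the diffusion.
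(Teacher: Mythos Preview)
Your proposal is correct. The paper's own proof takes precisely what you flag as the ``equivalent'' route: it applies It\^o's formula to $u(\mathbf{X}_t,N_t)$ for the joint process, uses the Neumann conditions to kill the local-time integral, deduces that $u(\mathbf{X}_{t\wedge\mathcal{T}},N_{t\wedge\mathcal{T}})+t\wedge\mathcal{T}$ is a martingale (since $\mathcal{A}u=-1$), and then lets $t\to\infty$ and invokes the Dirichlet condition at the open gate to conclude $u_j(\mathbf{x})=\mathbb{E}_{j,\mathbf{x}}[\mathcal{T}]$. That is a verification argument---assume a smooth solution of the boundary-value problem and identify it probabilistically---whereas your primary argument is the direct infinitesimal derivation of the PDE from the strong Markov property. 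Both are standard and yield the same system; the verification route avoids justifying the $h\to 0$ limit pointwise, trading it for a single optional-stopping/dominated-convergence step, while your derivation has the expository merit of explaining where the equation comes from before one writes it down. Your discussion of the local-time contribution and the role of the Neumann conditions matches the paper's treatment exactly.
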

\begin{proof}
Note that the right-hand side of the (\ref{meannarrow}) is the infinitesimal generator $\mathcal{A}$ of the process $(\mathbf{X}_t, N_t)_{t\geq 0}$. Denoting $u(\mathbf{x}, j)=u_j(\mathbf{x})$, by It\^{o} formula \cite{DuanBook2015}, we have the following relation, for   $s<t$,
\begin{eqnarray}
& & \mathbb{E}[u(\mathbf{X}_{t \wedge \mathcal{T}}, N_{t \wedge \mathcal{T}})-u(\mathbf{X}_{s \wedge \mathcal{T}}, N_{s \wedge \mathcal{T}})-\int_{s \wedge \mathcal{T}}^{t \wedge \mathcal{T}}
\mathcal{A} u(\mathbf{X}_r, N_r)dr|\mathcal{F}_{s \wedge \mathcal{T}}]   \nonumber \\
&=&-\mathbb{E}[\int_{s \wedge \mathcal{T}}^{t \wedge \mathcal{T}} \partial_\nu u(\mathbf{X}_r, N_r)dL_r|\mathcal{F}_{s \wedge \mathcal{T}}]+\mathbb{E}[\int_{s \wedge \mathcal{T}}^{t \wedge \mathcal{T}} \nabla u(\mathbf{X}_r, N_r)\sqrt {2D}dB_r|\mathcal{F}_{s \wedge \mathcal{T}}]    \nonumber \\
&=&-\mathbb{E}[\int_{s \wedge \mathcal{T}}^{t \wedge \mathcal{T}} \partial_\nu u(\mathbf{X}_r, N_r)dL_r|\mathcal{F}_{s \wedge \mathcal{T}}]  \label{eq1}
\end{eqnarray}
where $(\mathcal{F}_t)_{t\geq 0}$ is the filtration of the process $(\mathbf{X}_t, N_t)_{t\geq 0}$, and $L_t$ is defined as above. Neumann boundary conditions (\ref{Neu1}) and (\ref{Neu2}) show us that $\partial_\nu u(\mathbf{X}_r, N_r)=0$ if $r<  \mathcal{T}$. By the given condition, we know $\mathcal{A} u(\mathbf{X}_r, N_r)=-1$. This   leads to
\begin{eqnarray}
\mathbb{E}[u(\mathbf{X}_{t \wedge \mathcal{T}}, N_{t \wedge \mathcal{T}})-u(\mathbf{X}_{s \wedge \mathcal{T}}, N_{s \wedge \mathcal{T}})+(t \wedge \mathcal{T}-s \wedge \mathcal{T})|\mathcal{F}_{s \wedge \mathcal{T}}]=0,     \nonumber
\end{eqnarray}
which means the process $(u(\mathbf{X}_{t \wedge \mathcal{T}}, N_{t \wedge \mathcal{T}})+t \wedge \mathcal{T})_{t\geq 0}$ is a martingale. Then applying this equality with $s=0$ and letting $t\rightarrow \infty$, we get
\begin{eqnarray}
\mathbb{E}[u(\mathbf{X}_{\mathcal{T}}, N_{\mathcal{T}})-u(\mathbf{X}_0, N_0)+ \mathcal{T}|\mathcal{F}_0]=0. \nonumber
\end{eqnarray}
From the Dirichlet boundary conditions in (\ref{Dir}), we have $u(\mathbf{X}_{\mathcal{T}}, N_{\mathcal{T}})=0$. Thus we obtain
\begin{eqnarray} \label{equproof}
\mathbb{E}[\mathcal{T}|\mathcal{F}_0]=\mathbb{E} [u(\mathbf{X}_0, N_0)|\mathcal{F}_0].
\end{eqnarray}
Finally, we take the expectation of both sides of the equation (\ref{equproof}),
\begin{eqnarray}
\mathbb{E}_{k,\mathbf{x}}\mathcal{T}=\mathbb{E} u(\mathbf{x}_0, N_0)=\mathbb{E}u(\mathbf{x}, k)=u_k(\mathbf{x}). \nonumber
\end{eqnarray}
This gives the desired result when the initial distribution is such that  $N_0=k$ and $\mathbf{X}_0=\mathbf{x}$.
\end{proof}
Note that Ammari et al. \cite{Ammari2011} showed this theorem in the case of no drift.

In the polar coordinates, the NET $u(r, \theta)$ is the solution of the preceding  boundary value problem given the initial point $(r, \theta)$. It is a reflected diffusion process to the absorbing boundary $\partial \Omega_j, (j=1,2)$.
In the context of polar coordinates: $x=r \cos\theta$, $y=r \sin \theta$,
we obtain that
\begin{eqnarray}
  \partial_\nu u= \partial_x u \cos\theta+\partial_y u \sin \theta=0,
\end{eqnarray}

\begin{eqnarray}
  \partial_x u= \partial_r u \partial_x r+\partial_\theta u \partial_x \theta=\cos\theta \partial_r u-\frac{\sin \theta}{r} \partial_\theta u,
\end{eqnarray}

\begin{eqnarray}
  \partial_y u= \partial_r u \partial_y r+\partial_\theta u \partial_y \theta=\sin \theta \partial_r u+\frac{\cos\theta}{r} \partial_\theta u,
\end{eqnarray}
and
\begin{eqnarray}
  \Delta u=\partial_{xx} u +\partial_{yy} u =\partial_{rr} u+\frac{1}{r^2}\partial_{\theta\theta} u+\frac{1}{r}\partial_r u.
\end{eqnarray}

We use a finite difference method to solve the Dirichlet-Neumann boundary value problem (\ref{meannarrow})-(\ref{Neu2}) in the polar coordinates.

\section{Results and Discussion}
We now describe our   numerical simulation results of the narrow escape time (NET). The Dirichlet-Neumann boundary value problem in Theorem 1 is solved by a finite difference method. The goal here is to present the results for the  averaged NET from the numerical simulation, then conduct    comparisons for different parameters.

The average NET is the average   escape  time that all the particles in the annular domain take to escape the annular domain from any gate. It is also called average mean first passage time  \cite{Cheviakov2010, Chen2011}. In the context of polar coordinates, the average NET is given by
\begin{eqnarray}
  \frac{1}{\pi(r_2^2-r_1^2)}\int\int_{\Omega'}u(r,\theta)drd\theta,
\end{eqnarray}
where $\Omega'=[r_1, r_2]\times[0, 2\pi]$.

\begin{figure}[!htb]
\subfigure[]{ \label{Fig.sub.31}
\includegraphics[width=0.45\textwidth]{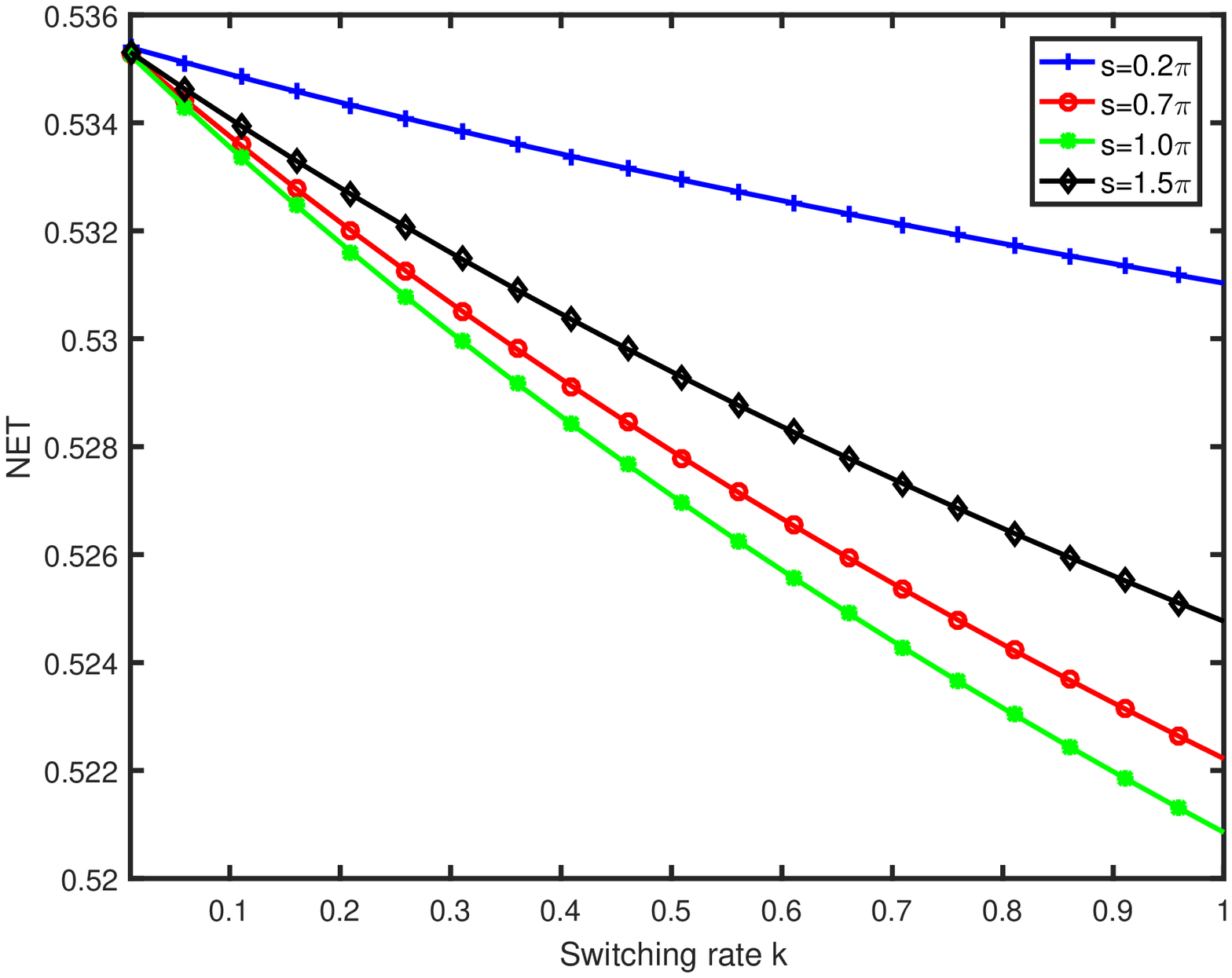}}
\subfigure[]{ \label{Fig.sub.32}
\includegraphics[width=0.45\textwidth]{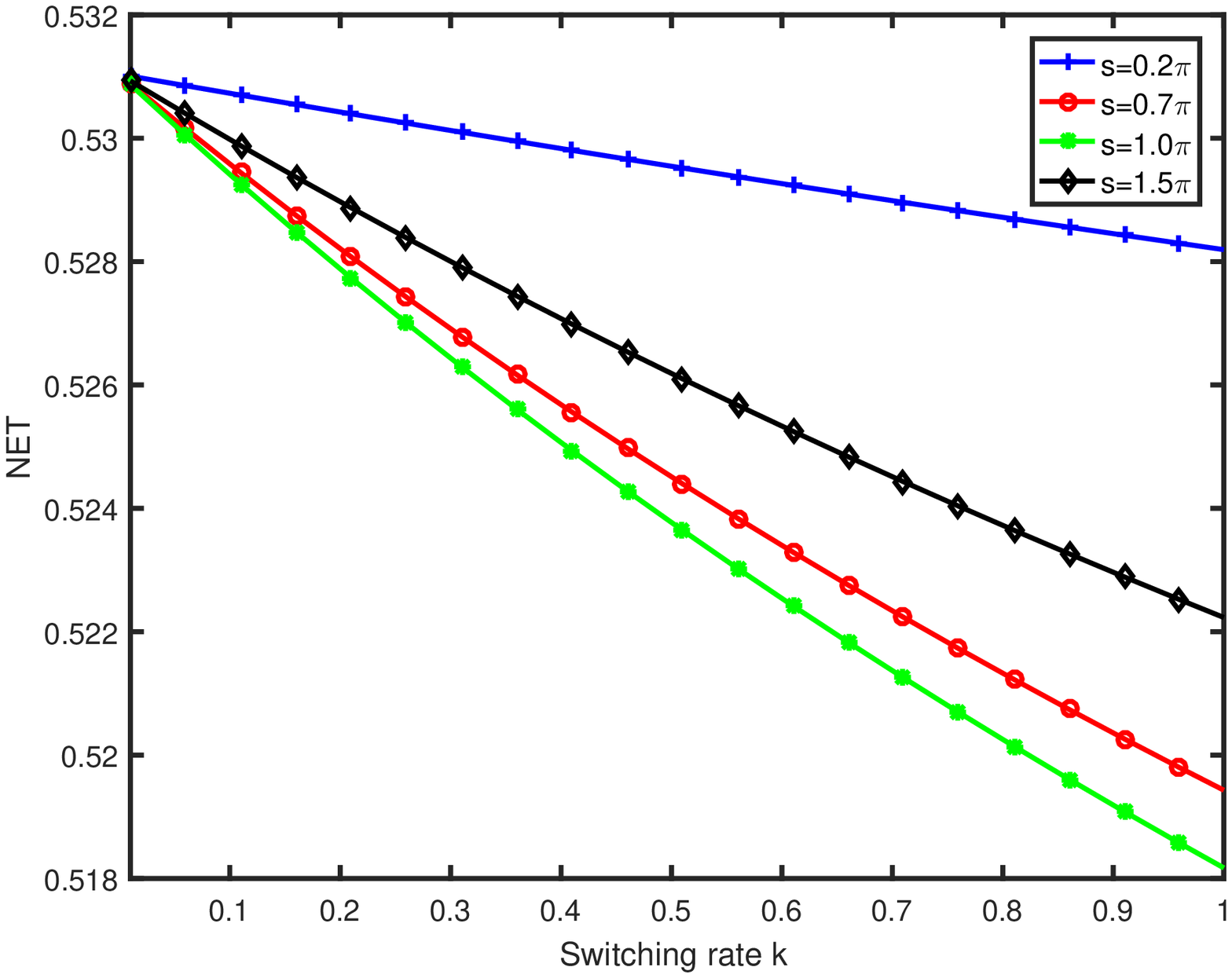}}
\caption{(Color online) Average narrow escape time with switching rate $k$ between the two gates for diffusion coefficient $D=1$ and angular velocity $w=3$: (a) With the first gate open initially. (b) With the second gate open initially.}
 \label{Fig_3}
\end{figure}

\begin{figure}[!htb]
\subfigure[]{ \label{Fig.sub.6111}
\includegraphics[width=0.45\textwidth]{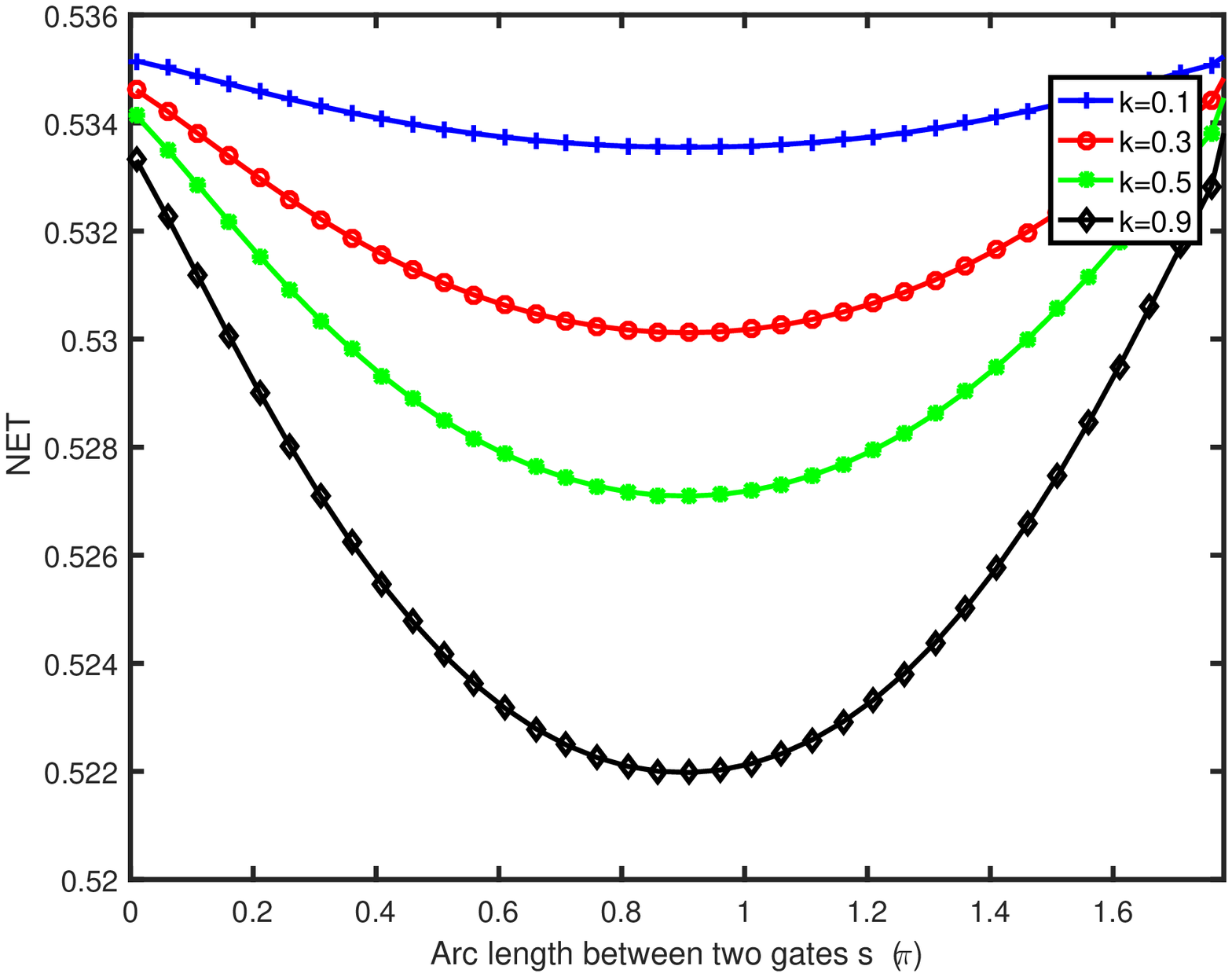}}
\subfigure[]{ \label{Fig.sub.6211}
\includegraphics[width=0.45\textwidth]{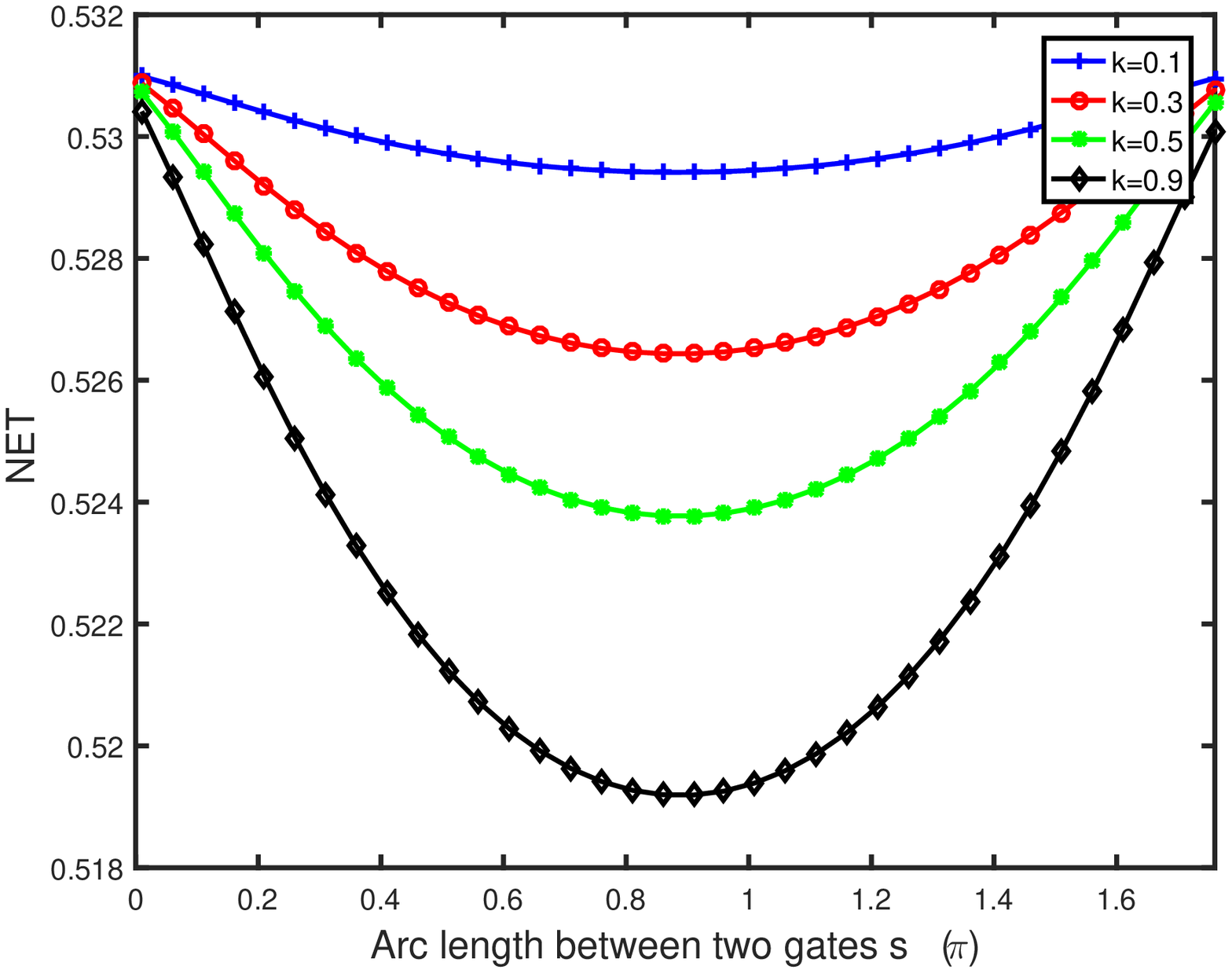}}
\caption{(Color online) Average narrow escape time with the arc length $s$ between two gates for diffusion coefficient $D=1$ and angular velocity $w=3$: (a) With the first gate open initially. (b) With the second gate open initially.}
 \label{Fig_611}
\end{figure}

First, we show the relationship of average NET with switching rate $k$ between the two gates and   arc length (`distance')  $s$ between two gates in Fig \ref{Fig_3} and Fig \ref{Fig_611}.

In Fig \ref{Fig_3}, the average NET is plotted as a function of switching rate between the two gates $k$. We compare  average NET for  different arc length $s$ between two gates. when the arc length between two gates $0<s\leq \pi$, the average NET decreases with the increase of arc length $s$ between two gates , while for $\pi<s< 2\pi$, the  average NET  is longer than that of $s=\pi$. To understand the reason for this at $s=\pi$,  note that the two gates are evenly distributed on the cell boundary, so the particle may choose the nearest side to escape quickly; but for $0<s<\pi$ or $\pi<s< 2\pi$, the two gates are on the same side of the semicircle, thus the particle in the other  semicircle has to cross over to the other side then escape. Fig \ref{Fig_611} plots the effects of arc length $s$ between two gates on average NET for various switching rate $k$ between the two gates. As we explained above, the average NET is smallest when  arc length $s$ between two gates is around $\pi$. Meanwhile, Fig \ref{Fig_3} and Fig \ref{Fig_611} both suggest  that the average NET decrease as the switching rate $k$ between the two gates  increases, this is consistent  with   known theoretical result \cite{Ammari2011}.


\begin{figure}[!htb]
\subfigure[]{ \label{Fig.sub.61}
\includegraphics[width=0.45\textwidth]{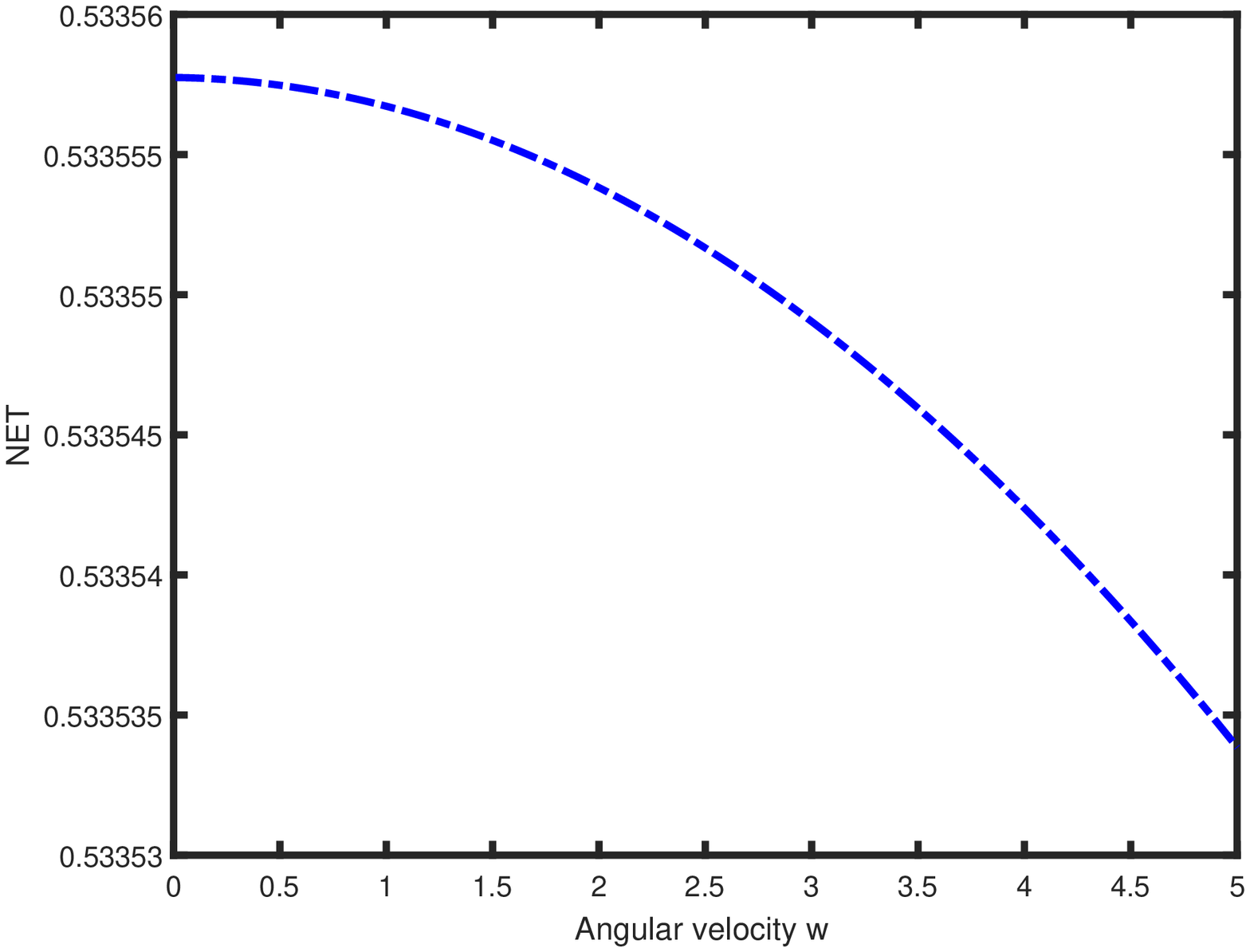}}
\subfigure[]{ \label{Fig.sub.62}
\includegraphics[width=0.45\textwidth]{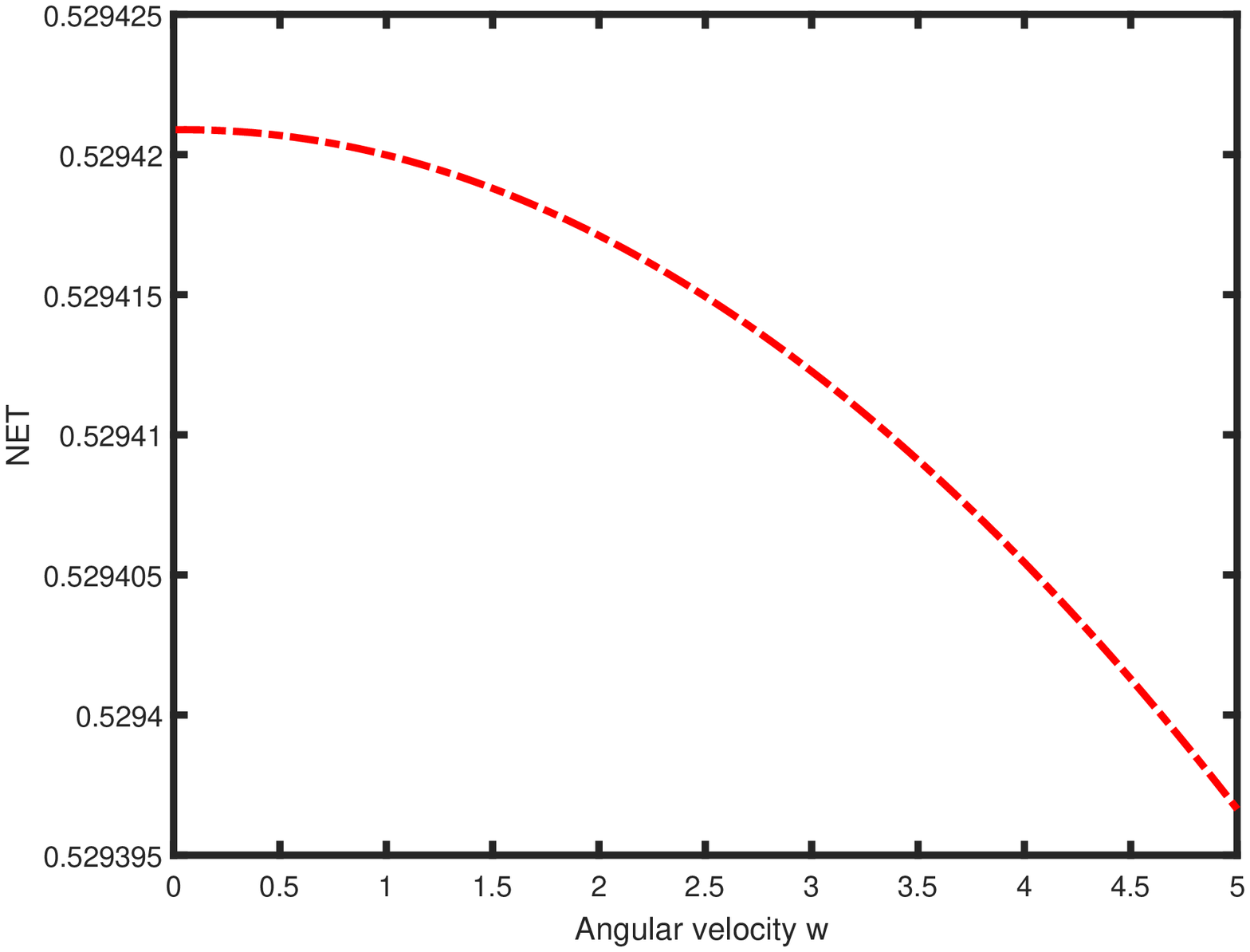}}
\caption{(Color online)Average narrow escape time with  angular velocity $w$ for switching rate $k=0.1$ between  the two gates, diffusion coefficient $D=1$ and  arc length $s=\pi$ between two gates:  (a)With the first gate open initially. (b) With the second gate open initially.}
 \label{Fig_601}
\end{figure}


Each curve in Fig \ref{Fig_601} predicts an decrease in the average NET with angular velocity $w$. We plot here for switching rate $k=0.1$ between the two gates, for $0<k<1$, it has similar trends.

\begin{figure}[!htb]
\subfigure[]{ \label{Fig.sub.6113}
\includegraphics[width=0.45\textwidth]{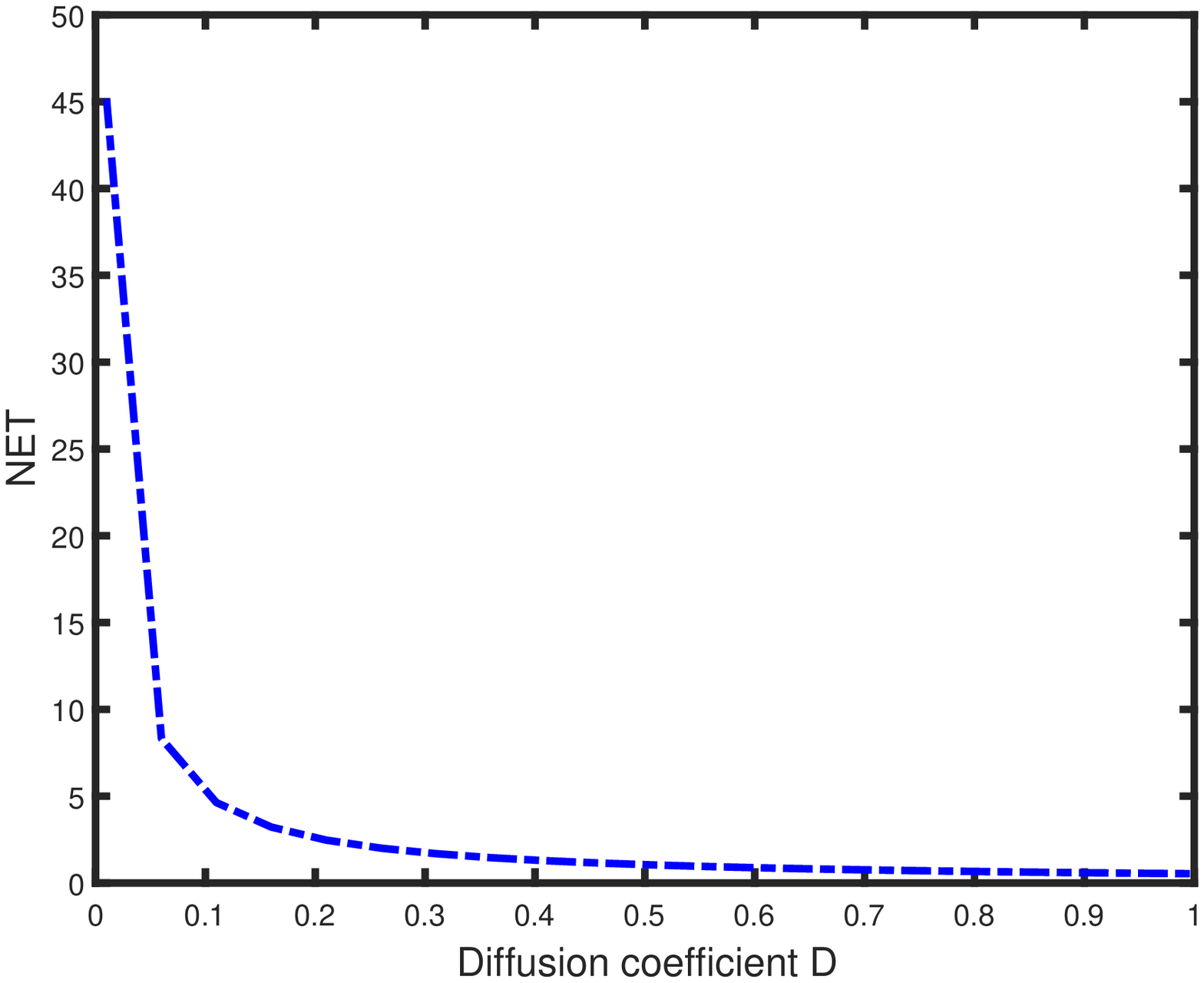}}
\subfigure[]{ \label{Fig.sub.6213}
\includegraphics[width=0.45\textwidth]{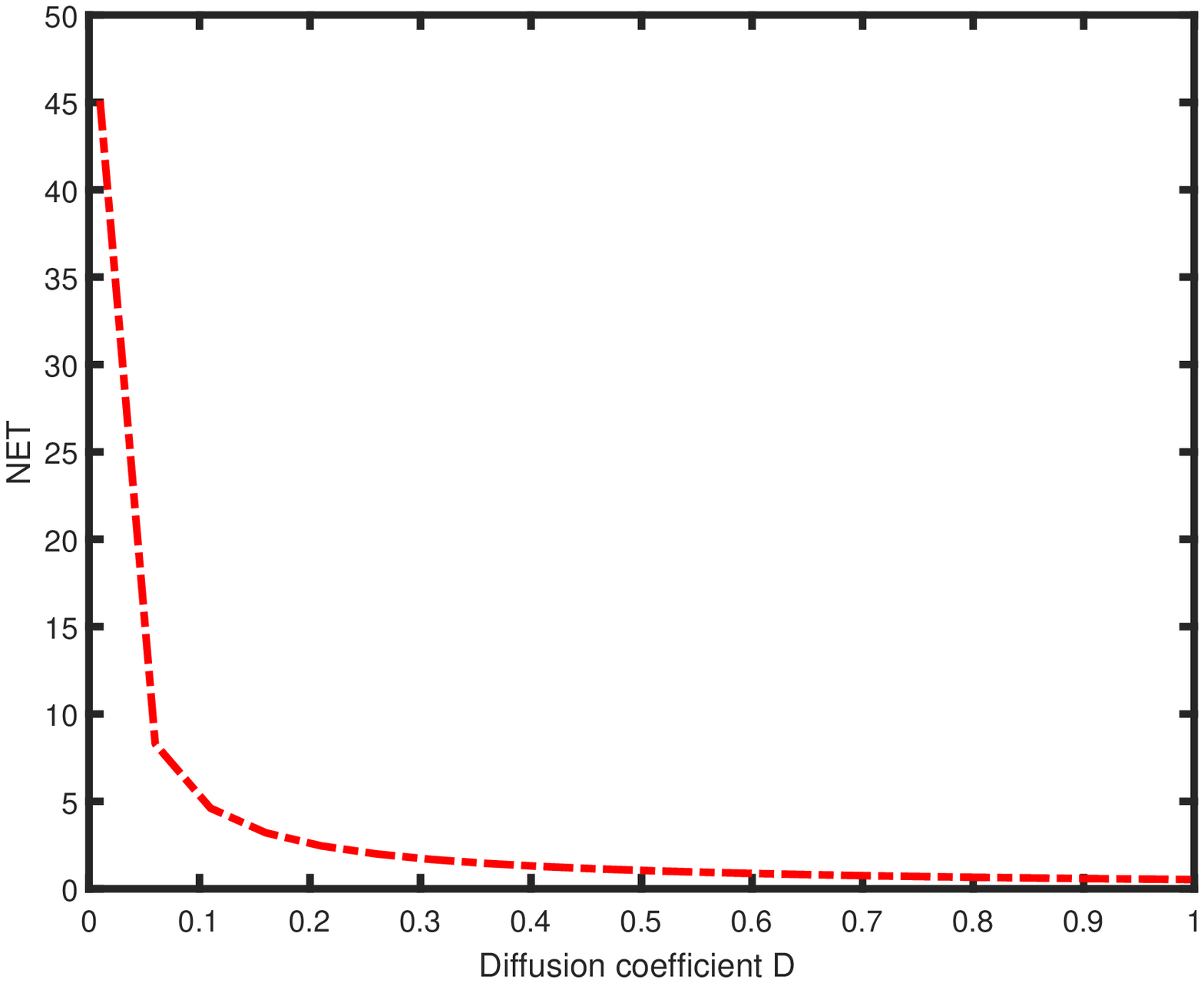}}
\caption{(Color online) Average narrow escape time   with diffusion coefficient  $D$ for switching rate $k=0.3$ between  the two gates, angular velocity $w=3$ and arc length $s=\pi$ between two gates:  (a) With the first gate open initially. (b) With the second gate open initially.}
 \label{Fig_613}
\end{figure}

Fig \ref{Fig_613} demonstrates that the effects of diffusion coefficient  $D$ on the average NET. We   see that the average NET has a monotonic behavior with the diffusion coefficient $D$. The average NET   declines when  $D$ is small, then when average NET crosses the inflection point, it changes more modestly. In this case, small diffusion coefficient $D$ is not good for particles to escape quickly.

\begin{figure}[!htb]
\subfigure[]{ \label{Fig.sub.71}
\includegraphics[width=0.45\textwidth]{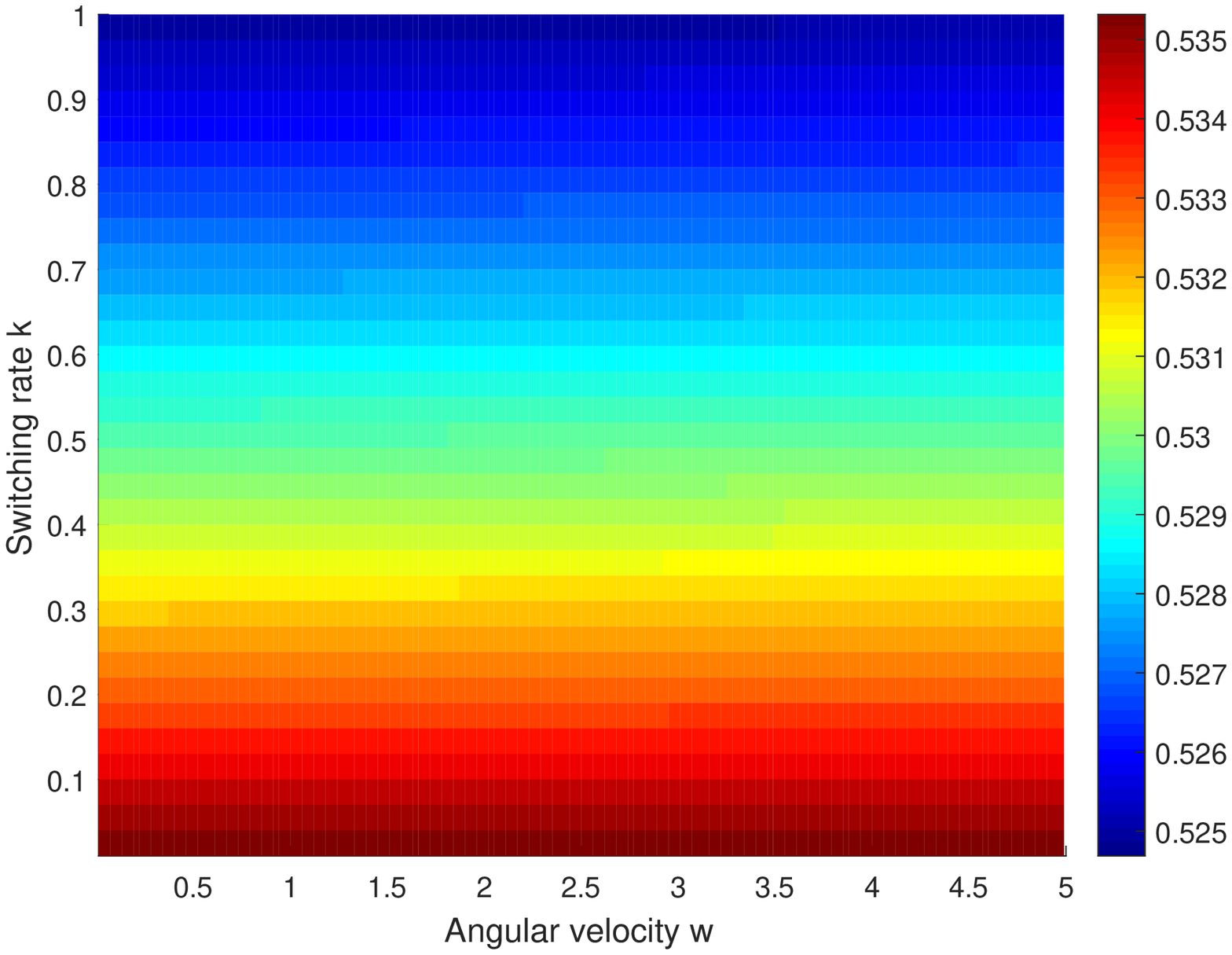}}
\subfigure[]{ \label{Fig.sub.72}
\includegraphics[width=0.45\textwidth]{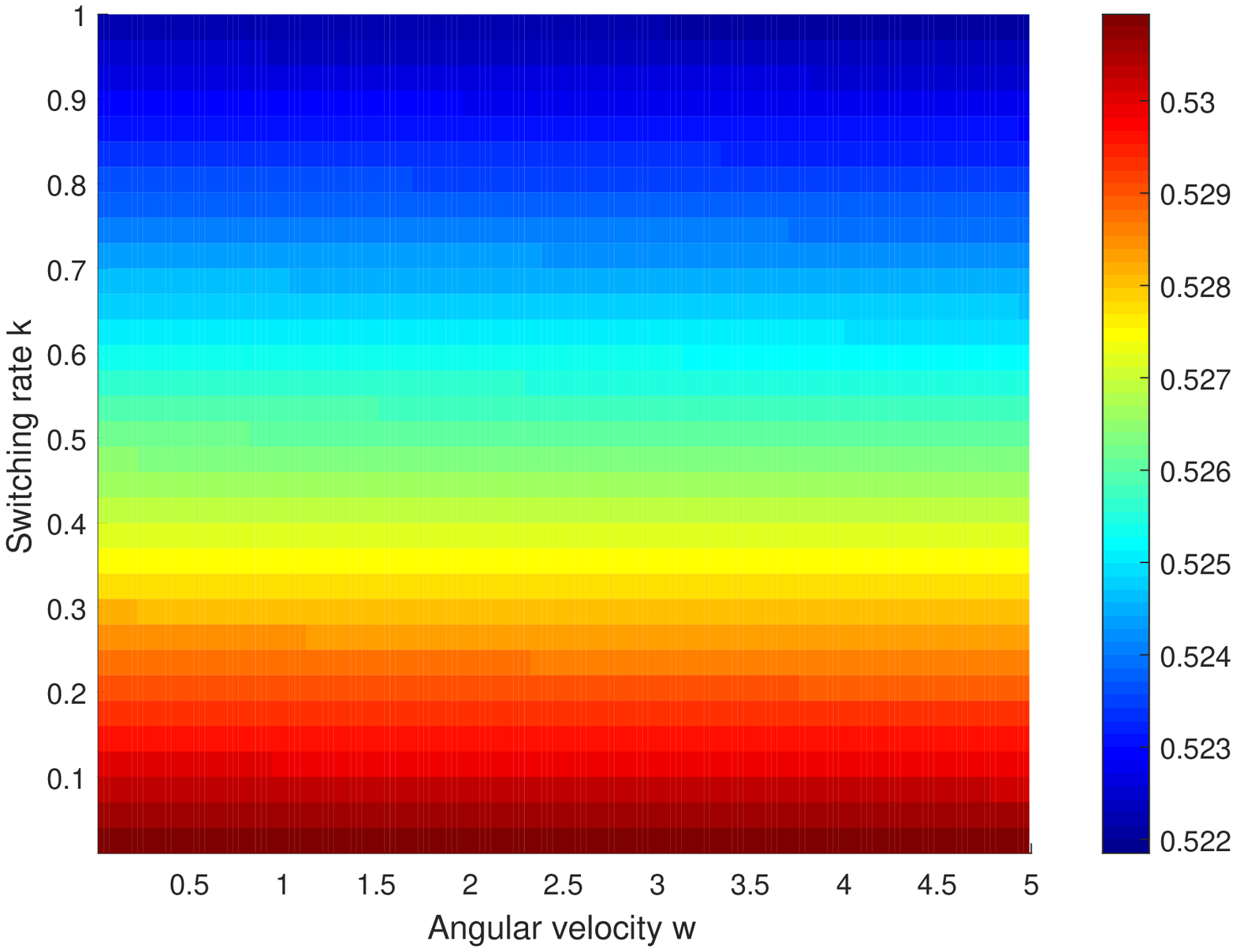}}
\caption{(Color online) Average narrow escape time with angular velocity $w$ and switching rate $k$ between the two gates for diffusion coefficient $D=1$ and  arc length $s=0.5 \pi$ between two gates: (a) With the first gate open  initially. (b) With the second gate open  initially.}
 \label{Fig_7}
\end{figure}

Fig \ref{Fig_7} presents the combined effects of angular velocity $w$ and switching rate $k$ between the two gates  on average NET. In  both Fig \ref{Fig_7}(a) and Fig \ref{Fig_7}(b), the red region presents the larger average NET domain, while the blue region represents the smaller parts. The smaller average NET values occur when $(w, k)$ is in the  blue domains at the top. Different initial states have very little influence on average NET.

\begin{figure}[!htb]
\subfigure[]{ \label{Fig.sub.711}
\includegraphics[width=0.45\textwidth]{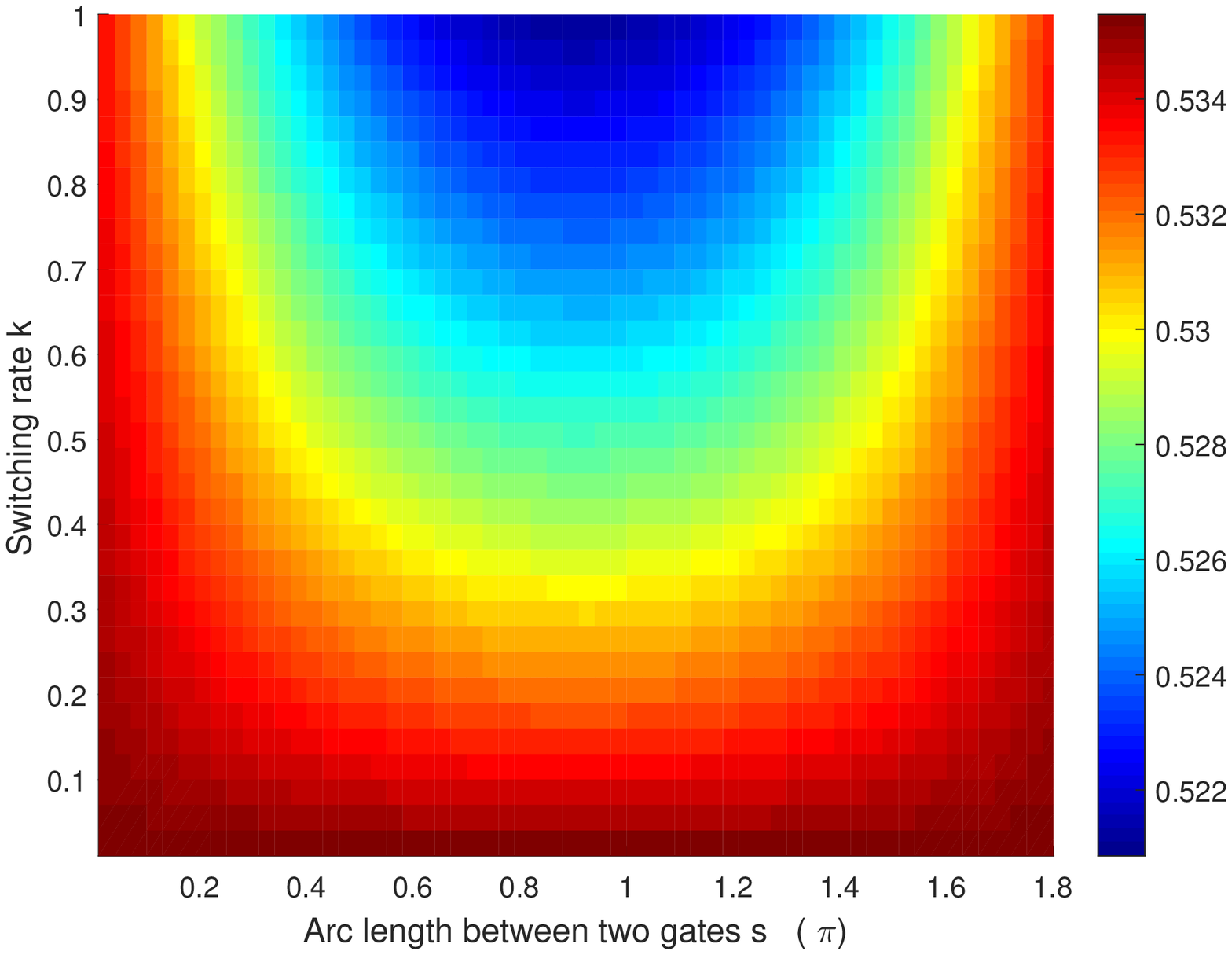}}
\subfigure[]{ \label{Fig.sub.721}
\includegraphics[width=0.45\textwidth]{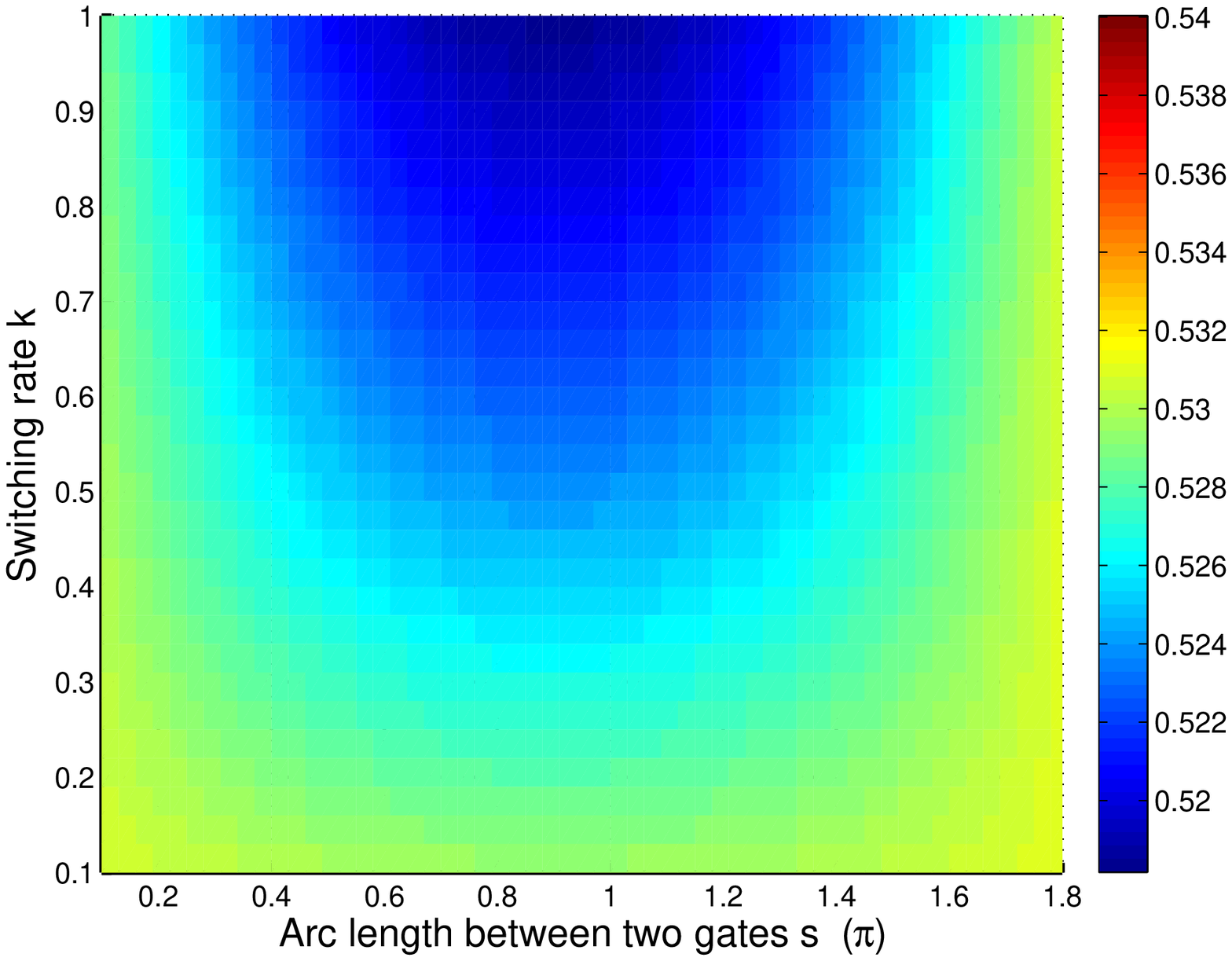}}
\caption{(Color online)  Average narrow escape time with  the arc length $s$ between two gates and switching rate $k$ between the two gates for diffusion coefficient $D=1$ and angular velocity $w=3$ : (a) With the first gate open  initially. (b) With the second gate open initially.}
 \label{Fig_71}
\end{figure}

\begin{figure}[!htb]
\subfigure[]{ \label{Fig.sub.712}
\includegraphics[width=0.45\textwidth]{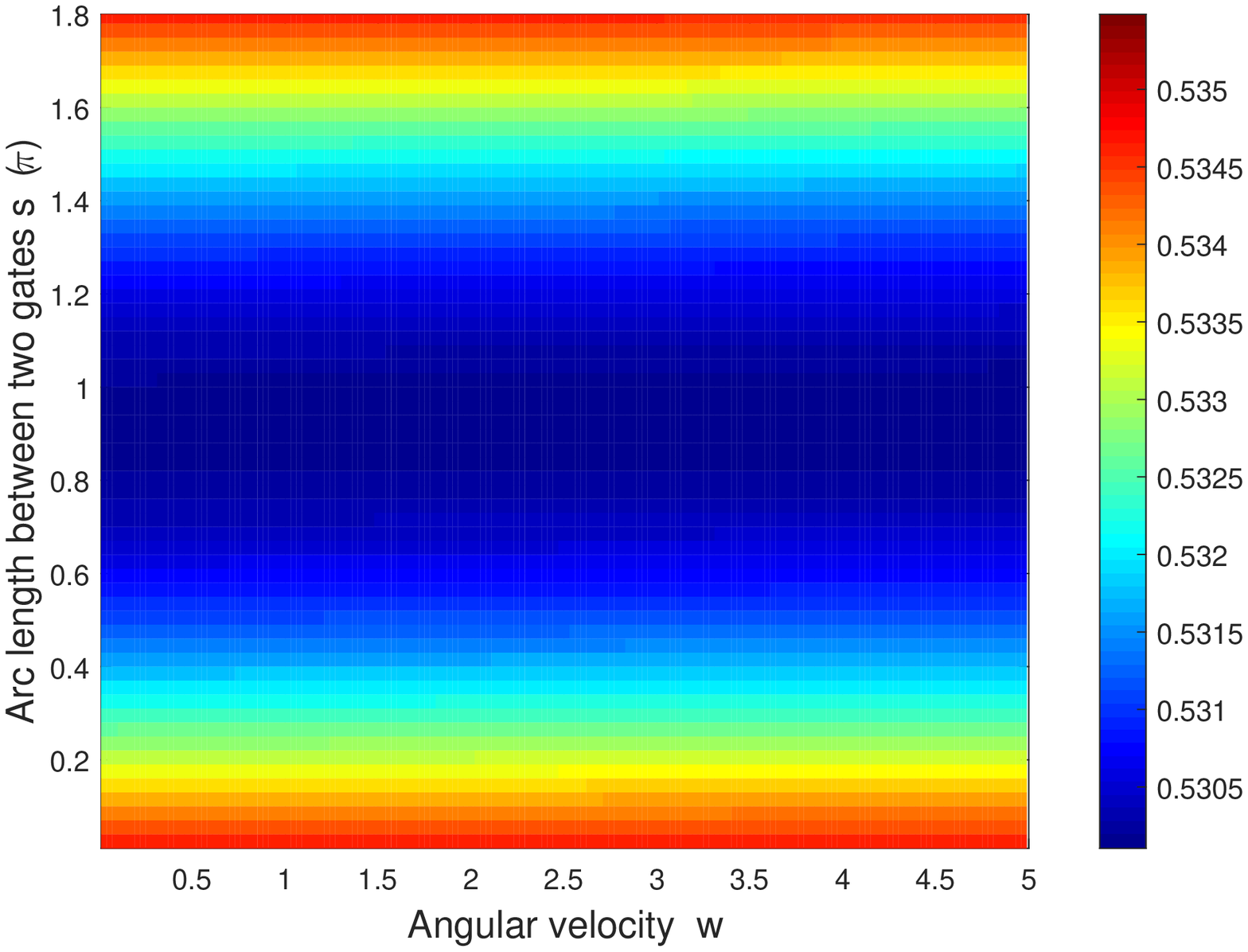}}
\subfigure[]{ \label{Fig.sub.722}
\includegraphics[width=0.45\textwidth]{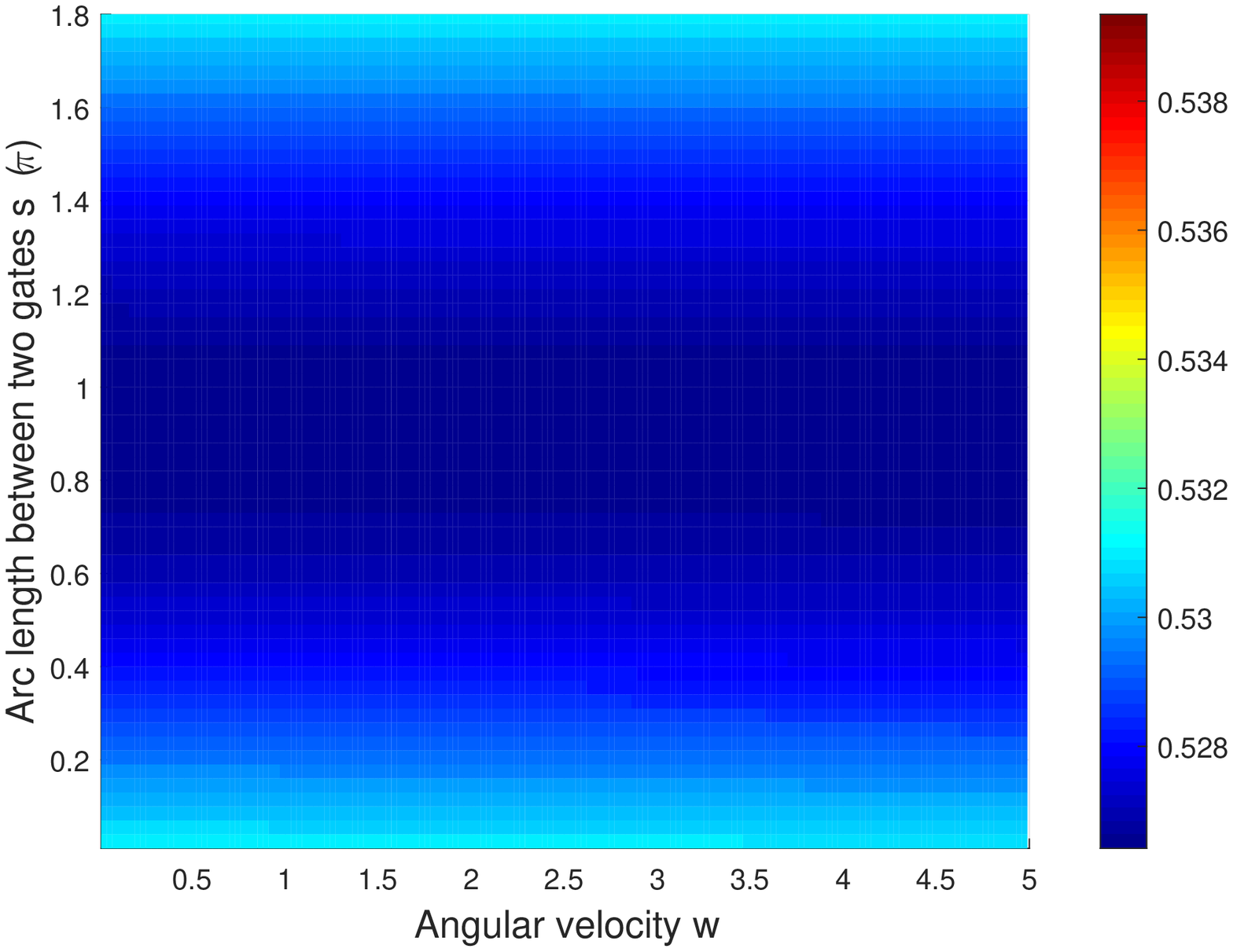}}
\caption{(Color online) Average narrow escape time with  the arc length $s$ between two gates and  angular velocity $w$ for diffusion coefficient $D=1$ and switching rate $k=0.5$ between the two gates: (a) With the first gate open  initially. (b) With the second gate open initially.}
 \label{Fig_72}
\end{figure}

In Fig \ref{Fig_71}, we display average NET as a function of arc length $s$ between two gates and switching rate $k$ between the two gates. Interestingly, we find  that the average NET $u_1$ is larger than $u_2$ in many domains. If  the initial state with the first gate open, escape from the first gate takes longer time than from the second gate. Similarly, we show the effects of arc length $s$ between two gates  and  angular velocity $w$ on  average NET in Fig \ref{Fig_72}. We find a counterintuitive phenomenon, when the initial state is that the second gate is open, escape from the second gate is faster than from the first gate.

\begin{figure}[!htb]
\subfigure[]{ \label{Fig.sub.h1}
\includegraphics[width=0.45\textwidth]{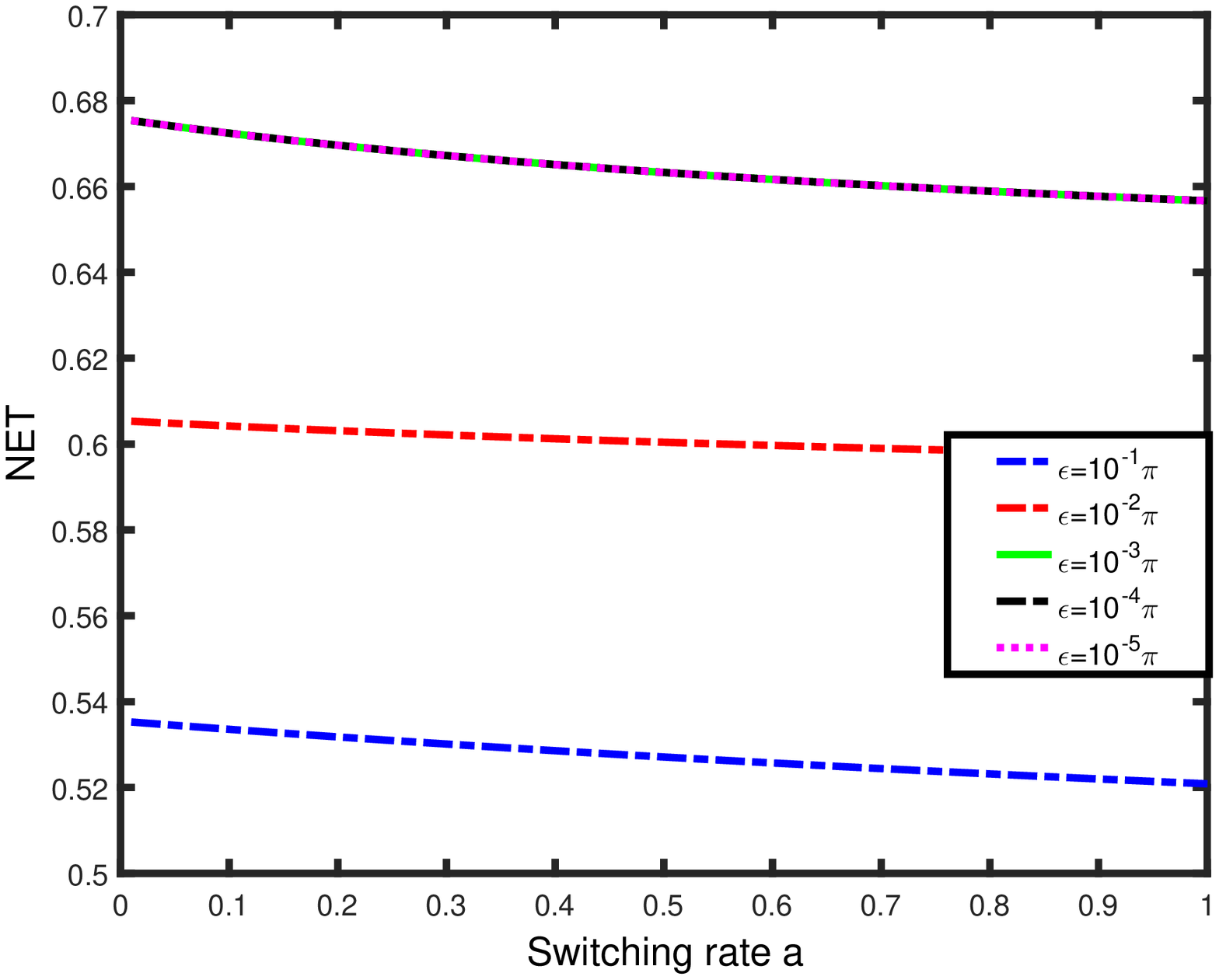}}
\subfigure[]{ \label{Fig.sub.h2}
\includegraphics[width=0.45\textwidth]{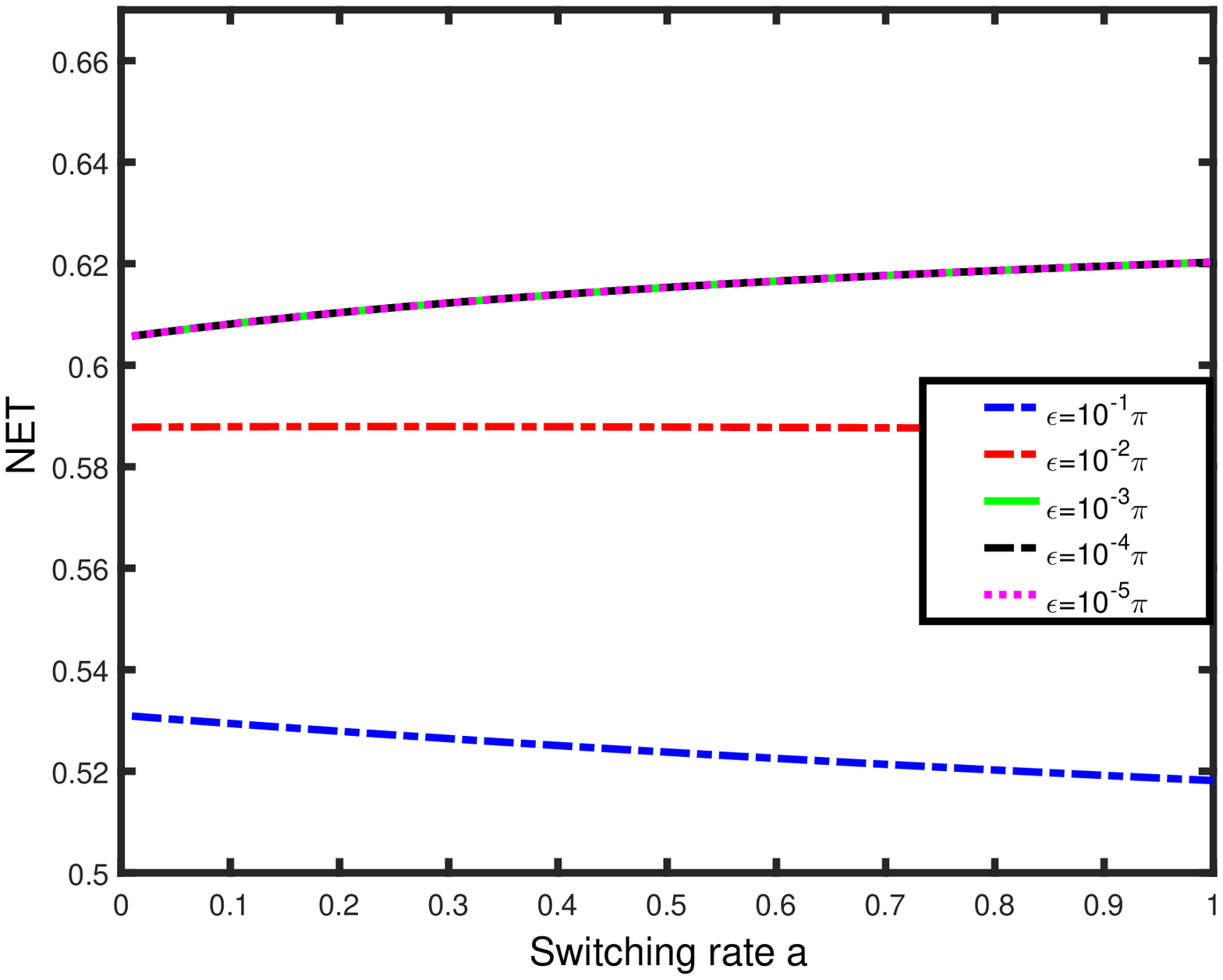}}
\caption{(Color online)  Average narrow escape time with  the switching rate $k$ between the two gates for diffusion coefficient $D=1$, angular velocity $w=3$ and arc length $s=\pi$ between two gates  for different size of gates: (a) With the first gate open initially. (b) With the second gate open initially.}
 \label{Fig h}
\end{figure}

Fig \ref{Fig h} shows the effects of the gate arc length size $\varepsilon$ on average NET. As expected, the average NET decreases with the switching rate $k$ between the two gates, but when $\varepsilon$ is sufficiently small, the average NET is independent of  gate arc length size $\varepsilon$.

\section{Conclusion}

In this work , we have  analysed the average narrow escape time through  two randomly switching gates in an annular domain, under a drift or a vector field, while most existing works  are for particles following    Brownian diffusion (i.e., no drift).  We have conducted numerical experiments to reveal    escape time's dependence on   angular velocity $w$ for the background cellular flow, switching rate $k$ between the two gates, diffusion coefficient $D$, arc length distance $s$ between two gates, and gate arc length size $\varepsilon$.


We have found that the mean escape time decreases with the switching rate $k$ between the two gates, angular velocity $w$ and diffusion coefficient $D$ for fixed arc length, but takes the minimum when the two gates are evenly separated on the  boundary for   given switching rate $k$ between the two gates. In particular, we have verified  that  when  the    arc length size $\varepsilon$ for the gates  is sufficiently small, the average narrow  escape time  is   independent of  the gate arc length size.  Our approach  selects  combinations of  system parameters (located  in the parameter space) such that the mean escape time is the longest or shortest.

The narrow escape  problems are relevant in certain  crucial processes in chemical physics and cellular biology.  Our research  sheds some lights on  the narrow escape time in a vector field,  when  system parameters vary.

\section*{Acknowledgements}
We would like to thank Xi Chen, Xiaoli Chen and  Xiaofan Li for helpful discussions for numerical schemes. This work was partly supported by the NSF, USA grant 1620449,  and the NSFC, China grants 11531006 and 11771449.

\section*{References}

\end{document}